\documentclass[preprint]{elsarticle}
\usepackage{epsf,psfrag,amssymb,amsfonts,amsmath,graphicx,times,color,subfigure,stfloats}
\usepackage[mathscr]{eucal}




\def\boxit#1{\vbox{\hrule\hbox{\vrule\kern3pt
        \vbox{\kern3pt#1\kern3pt}\kern3pt\vrule}\hrule}}

\def\reals{ { {\rm  I \kern-0.15em R }  } }
\def\complex{ {\,{{\rm C} \kern-0.50em \raise0.20ex {  |}}\, }}

\def\Rbf{{\bf R}}

\def\Rc{{\cal R}}
\def\Sc{{\cal S}}

\def\be{\begin{equation}}
\def\ee{\end{equation}}

\def\scalefig#1{\epsfxsize #1\textwidth}
%

\def\Rxx{\Rbf_{\ssstyle X\kern-.1em X}}

\let\ssstyle=\scriptscriptstyle


\def\etal{{\it et al. \/}}

\def\ie{{\it i.e.,\ \/}}
\def\Kout{\setbox1=\hbox{\Huge\bf K}\hbox to
1.05\wd1{\hspace{.05\wd1}
\def\Sout{\setbox1=\hbox{\Huge\bf S}\hbox to 1.05\wd1{\hspace{.05\wd1}

\def\ie{{\it i.e.,\ \/}}

\newtheorem{theorem}{Theorem}
\newtheorem{lemma}{Lemma}

\newproof{proof}{Proof}
\newproof{pol}{Proof of Lemma~\ref{lemma:appro_ratio_greedy}}
\def\scalefig#1{\epsfxsize #1\textwidth}
\def\nn{{\nonumber}}

\journal{Theoretical Computer Science}

\begin{document}

\begin{frontmatter}

\title{A Note on: `Algorithms for Connected Set Cover
Problem and Fault-Tolerant Connected Set Cover Problem'}

\author[Davis]{Wei Ren}

\author[Davis]{Qing Zhao\corref{cor}}

\cortext[cor]{Corresponding author. Phone: 1-530-752-7390. Fax:
1-530-752-8428. Email: qzhao@ucdavis.edu}

\address[Davis]{Dept. of Electrical and Computer Engineering, University of
California, Davis, CA 95616}


\begin{abstract}
A\footnotetext{This work was supported by the Army Research
Laboratory NS-CTA under Grant W911NF-09-2-0053.} flaw in the greedy
approximation algorithm proposed by Zhang \etal for minimum connected
set cover problem is corrected, and a stronger result on the
approximation ratio of the modified greedy algorithm is established.
The results are now consistent with the existing results on
connected dominating set problem which is a special case of the minimum
connected set cover problem.
\end{abstract}

\begin{keyword}
connected set cover, greedy algorithm, approximation ratio.
\end{keyword}

\end{frontmatter}


\section{Introduction}
\label{sec:intro}

Let $V$ be a set with a finite number of elements, and $\Sc=\{S_i
\subseteq V:~i=1,...,n\}$ a collection of subsets of $V$. Let $G$ be
a connected graph with the vertex set $\Sc$. A {\em connected set
cover} (CSC) $\Rc$ with respect to $(V,\Sc,G)$ is a set cover of $V$
such that $\Rc$ induces a connected subgraph of $G$. The minimum
connected set cover (MCSC) problem is to find a CSC with the minimum
number of subsets in $\Sc$. In~\citep{Zhang&Etal:09TCS}, Zhang \etal
proposed a greedy approximation algorithm (Algorithm 2
in~\citep{Zhang&Etal:09TCS}) for minimum connected set cover (MCSC)
problem, and obtained the approximation ratio of this algorithm.
This algorithm has a flaw, and the approximation ratio is incorrect.
In this note, we modify the greedy algorithm to fix the flaw and
establish the approximation ratio of the modified algorithm. The
approximation ratio is with respect to the optimal solution to the
set cover problem $(V,\Sc)$, instead of the optimal solution to the
MCSC problem $(V,\Sc,G)$, and thus it is stronger than the one
obtained in~\citep{Zhang&Etal:09TCS}.


\section{Greedy Algorithm}
\label{sec:greedy_algo}

Before stating the algorithm, we introduce the following notations
and definitions. Most of them have also been used
in~\citep{Zhang&Etal:09TCS}. For two sets $S_1,S_2 \in \Sc$, let
$\textrm{dist}_G (S_1,S_2)$ be the length of the shortest path
between $S_1$ and $S_2$ in the auxiliary graph $G$, where the length
of a path is given by the number of edges; $S_1$ and $S_2$ are said
to be {\em graph-adjacent} if they are connected via an edge in $G$
$(\ie \textrm{dist}_G(S_1,S_2)=1)$, and they are said to be {\em
cover-adjacent} if $S_1 \cap S_2 \neq \emptyset$. Notice that in
general, there is no connection between these two types of
adjacency. The {\em cover-diameter} $D_c (G)$ is defined as the
maximum distance between any two cover-adjacent sets, \ie
\begin{eqnarray*}
D_c (G) = \max \{\textrm{dist}_G (S_1, S_2)~|~S_1,S_2 \in \Sc
\textrm{ and $S_1,S_2$ are cover-adjacent}\}.
\end{eqnarray*}

At each step of the algorithm, let $\Rc$ denote the collection of
the subsets that have been selected , and $U$ the set of elements of
$V$ that have been covered. Given $\Rc \neq \emptyset$ and a set
$S\in \Sc \setminus \Rc$, an $\Rc \rightarrow S$ path is a path
$\{S_0,S_1,...,S_k\}$ in $G$ such that (i) $S_0 \in \Rc$; (ii) $S_k
=S$; (iii) $S_1,...,S_k \in \Sc \setminus \Rc$. Let $|P_S|$ denote
the length of an $\Rc \rightarrow S$ path $P_S$, and it is equal to
the number of vertices of $P_S$ that does not belong to $\Rc$. Then
we define the weight ratio $e(P_S)$ of $P_S$ as
\begin{eqnarray} \label{eqn:def_weight_ratio}
e(P_S) = \frac{|P_S|}{|C(P_S)|},
\end{eqnarray}
where $|C(P_S)|$ is the number of elements that are covered by $P_S$
but not covered by $\Rc$.

For the greedy algorithm in~\citep{Zhang&Etal:09TCS}, after the
subset with the maximum size is selected at the first step, only the
subsets that are not in $\Rc$ and are cover-adjacent with some
subset in $\Rc$ are considered in the following iterations. At some
iteration, there may not exist a subset $S\in \Sc \setminus \Rc$
that is cover-adjacent to a subset in $\Rc$, and if we only consider
cover-adjacent subsets, then the algorithm will enter a deadlock.
Consider a simple example where $V=\{1,2,3,4\}$, $\Sc =
\{\{1,2\},\{1\},\{2\},\{2,3\},\{4\}\}$, and $G$ is a complete graph.
If we apply the greedy algorithm in~\citep{Zhang&Etal:09TCS} to this
MCSC problem, then after $\{1,2\}$ and $\{2,3\}$ are selected, the
algorithm enters a deadlock.

To fix this problem, we modify the greedy algorithm to include not
only cover-adjacent subsets but also graph-adjacent subsets. The
modified greedy algorithm for the MCSC problem is presented below. \\

\textbf{Input}: $(V,\Sc,G)$.

\textbf{Output}: A connected set cover $\Rc$.

\begin{enumerate}
\item[1.] Choose $S_0 \in \Sc$ such that $|S_0|$ is the
maximum, and let $\Rc = \{S_0\}$ and $U = S_0$.
\item[2.] \textbf{While} $V \setminus U \neq \emptyset$ \textbf{DO}
\begin{itemize}
\item[2.1.] For each $S \in \Sc \setminus \Rc$ which is cover-adjacent
{\em or graph-adjacent} with a set in $\Rc$, find a shortest $\Rc
\rightarrow S$ path $P_S$.
\item[2.2.] Select $P_S$ with the minimum weight ratio $e(P_S)$
defined in~(\ref{eqn:def_weight_ratio}), and let $\Rc = \Rc \cup
P_S$ (add all the subsets of $P_S$ to $\Rc$) and $U = U \cup
C(P_S)$.
\end{itemize}
\textbf{End while}
\item[3.] \textbf{Return} $\Rc$.
\end{enumerate}


\section{Approximation Ratio}
\label{sec:appro_ratio}

In~\citep{Zhang&Etal:09TCS}, the approximation ratio of the greedy
algorithm is shown to be $1+D_C (G)\cdot H(\gamma -1)$, where
$\gamma = \max \{|S|~|~S\in \Sc\}$ is the maximum size of all the
subsets in $\Sc$ and $H(\cdot)$ is the harmonic function. In the
proof, the authors assume that for every subset $S^*$ in the optimal
solution $\Rc_C^*$ to the MCSC problem, at least one of its elements
is covered by the subset $S_0$ selected by the greedy algorithm at
step 1. In general, some $S^*$ may not share any common elements
with $S_0$. Thus, this assumption is invalid, and the resulting
approximation ratio is incorrect. In the following theorem, we
establish the approximation ratio of the modified greedy algorithm
for the MCSC problem. The proof of this theorem does not require
this assumption, and it takes into account the additional search of
graph-adjacent subsets in the modified algorithm. Furthermore, a
stronger result on the approximation ratio is shown in the proof
(see Lemma~\ref{lemma:appro_ratio_greedy}). Specifically, the
approximation ratio is between the solution returned by the
algorithm and the optimal solution to the set cover problem, and the
latter is always not greater than the optimal solution to the MCSC
problem.

\begin{theorem} \label{thm:appro_ratio_greedy}
Given an MCSC probelm $(V,\Sc,G)$, the approximation ratio of the
modified greedy algorithm is at most $D_C (G)(1+H(\gamma -1))$,
where $\gamma = \max \{|S|~|~S\in \Sc\}$ is the maximum size of the
subsets in $\Sc$ and $H(\cdot)$ is the harmonic function.
\end{theorem}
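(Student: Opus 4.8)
The plan is to follow the standard greedy set-cover charging argument, adapted to account for the fact that each greedy step adds a whole path (of at most $D_C(G)$ new vertices) rather than a single set, and to charge against an optimal \emph{set cover} $\Rc^*$ of $(V,\Sc)$ rather than an optimal connected set cover. First I would isolate the key structural observation that makes the modified algorithm work: whenever $V\setminus U\neq\emptyset$, there exists at least one set $S\in\Sc\setminus\Rc$ that covers a currently-uncovered element and is cover- or graph-adjacent to $\Rc$, and for any such $S$ a shortest $\Rc\to S$ path has length at most $D_C(G)$. This is where the cover-diameter enters: a still-needed optimal set $S^*$ may not touch $\Rc$ at all in the cover sense, but it is cover-adjacent to \emph{some} set already reachable, so the detour to pick it up costs at most $D_C(G)$ vertices. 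I would phrase this as the auxiliary Lemma~\ref{lemma:appro_ratio_greedy} promised in the text, stating the sharper bound against $|\Rc^*|$, and then Theorem~\ref{thm:appro_ratio_greedy} follows since $|\Rc^*|\le|\Rc_C^*|$.

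Next I would set up the amortized cost bookkeeping. Index the greedy iterations $j=1,2,\dots$; at iteration $j$ let $U_{j-1}$ be the set of covered elements at the start, let $P_j$ be the chosen path, and let $c_j=|C(P_j)|$ be the number of newly covered elements, so the algorithm pays $|P_j|\le D_C(G)$ vertices to cover $c_j$ new elements. Assign to each newly covered element $v$ the price $|P_j|/c_j = e(P_j)$. The total number of sets selected is $1+\sum_j|P_j| \le 1 + \sum_v \mathrm{price}(v)$ (the leading $1$ for the initial $S_0$). The core inequality is that at the start of iteration $j$, the minimum weight ratio is at most $D_C(G)\cdot |\Rc^*| / |V\setminus U_{j-1}|$: because the at-most-$|\Rc^*|$ optimal sets, reached via shortest paths of length $\le D_C(G)$ each, collectively cover all of $V\setminus U_{j-1}$, so by averaging one of these candidate paths has weight ratio at most $D_C(G)|\Rc^*|/|V\setminus U_{j-1}|$, and the greedy choice is no worse. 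Summing the element prices telescopes into a harmonic series, giving $\sum_v \mathrm{price}(v) \le D_C(G)\,|\Rc^*|\, H(|V|)$ in the crude version, which I would then refine to $H(\gamma-1)$ using the standard trick of attributing uncovered elements within an optimal set and noting the first greedy step already removed a set of size $\gamma$, so no optimal set ever has more than $\gamma$ relevant elements and the per-set harmonic sum runs only up to $\gamma-1$ — this yields exactly $D_C(G)(1+H(\gamma-1))$ after adding back the initial set.

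I would then write the refinement carefully: partition the elements of $V$ among the optimal sets $S^*_1,\dots,S^*_{|\Rc^*|}$ (each element assigned to one optimal set containing it), and for a fixed $S^*_i$ track how its still-uncovered elements shrink across greedy iterations. If at some iteration $k$ of those elements remain uncovered in $S^*_i$, then the path $P_{S^*_i}$ through the shortest $\Rc\to S^*_i$ path is a valid candidate with weight ratio at most $D_C(G)/k$ (at most $D_C(G)$ vertices covering those $k$ elements plus possibly more), so every element of $S^*_i$ gets price at most $D_C(G)/k$ for the appropriate decreasing values of $k$; summing over the elements of $S^*_i$ gives at most $D_C(G)(1/|S^*_i| + 1/(|S^*_i|-1)+\dots+1) = D_C(G)H(|S^*_i|)$. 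Since $S_0$ chosen at step 1 has $|S_0|=\gamma$, after step 1 the elements of $S_0$ are already covered, so each remaining optimal set contributes at most $H(\gamma-1)$ fresh elements' worth of charge — more precisely one shows $|S^*_i\setminus S_0|\le\gamma-1$ is not quite enough and one instead argues the harmonic sum for $S^*_i$ never needs its top term because $|S^*_i|\le\gamma$ and the elements of $S^*_i$ that were in $S_0$ are free — giving $\sum_i D_C(G)H(\gamma-1) = D_C(G)|\Rc^*|H(\gamma-1)$ total charge, and adding the $1$ set from step 1 yields $|\Rc| \le |\Rc^*|\,D_C(G)(1+H(\gamma-1))$ after normalizing per optimal set.

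The main obstacle I anticipate is the $D_C(G)$ bound on the candidate path length together with the subtlety that a greedy-chosen path $P_j$ may cover elements belonging to \emph{several} different optimal sets at once, so the per-optimal-set charging must be done via the "at each iteration, pick the optimal set with the most remaining uncovered elements as a benchmark" style argument rather than a clean disjoint decomposition; getting the harmonic index to stop at $\gamma-1$ (rather than $\gamma$ or $|V|$) requires leaning on the maximality of $|S_0|$ at step 1 in exactly the right place, and making sure the extra $+1$ in $D_C(G)(1+H(\gamma-1))$ is accounted for by the single initial set and not double-counted. Establishing that the candidate $\Rc\to S^*$ path indeed exists and has length $\le D_C(G)$ at \emph{every} iteration with $V\setminus U\neq\emptyset$ — i.e. that the modified algorithm never deadlocks — is the conceptual heart and should be stated and proved first.
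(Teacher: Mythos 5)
Your overall strategy --- a charging argument in which each newly covered element is priced at the weight ratio of the path that covers it, followed by a per-optimal-set harmonic bound carrying an extra factor of $D_C(G)$ for the path lengths --- is the same as the paper's. But there is a genuine gap at the critical step, and it is precisely the step that produces the ``$1+$'' inside $D_C (G)(1+H(\gamma-1))$. You claim that whenever an optimal set $S^*_i$ still has $k$ uncovered elements, the shortest $\Rc\rightarrow S^*_i$ path is a valid candidate of length at most $D_C(G)$, so every element of $S^*_i$ is priced at most $D_C(G)/k$. This is false for the \emph{first} iteration at which an element of $S^*_i$ gets covered: $S^*_i$ becomes cover-adjacent to $\Rc$ only \emph{after} one of its elements lies in $U$, and before that it need not be a candidate at all (it is not cover-adjacent to any set in $\Rc$, and nothing forces it to be graph-adjacent). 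The same defect invalidates your ``averaging'' inequality $\min_S e(P_S)\le D_C(G)\,|\Rc^*|/|V\setminus U_{j-1}|$, since the optimal sets that do not yet meet $U$ need not be reachable candidates. The paper handles this first batch separately: whatever path the greedy rule happens to select at that iteration has length at most $D_C(G)$ and covers at least $n_{i_1}-n_{(i_1+1)}$ elements of $S^*$, so the \emph{total} charge of the first batch is at most $D_C(G)$ (see (\ref{eqn:i_12_mod})--(\ref{eqn:sum_i1_k1})); only from the second batch onward, once $S^*$ is cover-adjacent to $\Rc$, is the greedy comparison against $P_{S^*}$ legitimate, giving the contribution $D_C(G)H(n_{i_2})$.

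Consequently your bookkeeping for the two refinements is also misattributed. The truncation of the harmonic sum to $H(\gamma-1)$ does not come from the maximality of $|S_0|$ (you yourself concede that $|S^*_i\setminus S_0|\le\gamma-1$ ``is not quite enough''); it comes from the fact that at least one element of $S^*$ is covered in the first batch, so the harmonic sum over the remaining batches starts at $n_{i_2}\le n_{i_1}-1=|S^*|-1\le\gamma-1$. Likewise the additive $1$ inside the parentheses is the (per-optimal-set) charge of that first batch, not the single set $S_0$ chosen at step 1; in the paper's accounting the elements of $S_0$ are charged $1/|S_0|$ each, $|\Rc|$ equals the total charge exactly, and there is no separate ``$+1$'' outside the sum. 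To repair your argument you would need to (a) bound the first-batch charge of each $S^*_i$ by $D_C(G)$ using only the length bound on whatever path is actually selected, and (b) restrict the comparison with $P_{S^*_i}$ to iterations after $S^*_i$ has become cover-adjacent to $\Rc$ --- at which point you would have reproduced the paper's proof.
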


\begin{proof}
We show a lemma stronger than the above theorem.

\begin{lemma} \label{lemma:appro_ratio_greedy}
Let $\Rc^*$ be an optimal solution to the set cover problem
$\{V,\Sc\}$, and $\Rc$ returned by the modified greedy algorithm for
the MCSC problem $(V,\Sc,G)$. Then we have that
\begin{eqnarray*}
\frac{|\Rc|}{|\Rc^*|}\leq D_C (G)(1+H(\gamma-1)).
\end{eqnarray*}
\end{lemma}

Let $\Rc_C^*$ be an optimal solution to the MCSC problem
$(V,\Rc,G)$. Since $|\Rc^*|\leq |\Rc_C^*|$,
Theorem~\ref{thm:appro_ratio_greedy} follows from
Lemma~\ref{lemma:appro_ratio_greedy}.

\begin{pol}
The proof is based on the classic charge argument. Each time a subset $S_0$
(at step 1) or a shortest $\Rc \rightarrow S$ path $P^*_S$ (at step
2) is selected to be added to $\Rc$, we charge each of the newly
covered elements $\frac{1}{|S_0|}$ (at step 1) or $e(P^*_S)$ defined
in~(\ref{eqn:def_weight_ratio}) (at step 2). During the entire
procedure, each element of $V$ is charged exactly once. Assume that
step 2 is completed in $K-1$ iterations. Let $P^*_{Si}$ be the
shortest $\Rc \rightarrow S$ path selected by the algorithm at
iteration $i$. Let $w(a)$ denote the charge of an element $v$ in
$V$. Then we have
\begin{eqnarray} \label{eqn:charge_eqn}
\sum_{v\in V} w(v) = \sum_{i=0}^{K-1} \sum_{v\in C (P^*_{Si})} w(v)
= \sum_{i=0}^{K-1} \sum_{v\in C(P^*_{Si})} \frac{|P^*_{Si}|}{|C
(P^*_{Si})|} = \sum_{i=0}^{K-1} |P^*_{Si}| = |\Rc|,
\end{eqnarray}
where $P^*_{S0}=\{S_0\}$, $|P^*_{S0}|=1$, and $C (P^*_{S0})=S_0$.

Suppose that $\Rc^*=\{S^*_1,...,S^*_N\}$ is a minimum set cover for
$\{V,\Sc\}$. Since an element of $V$ may be contained in more than
one subset of $\Rc^*$, it follows that
\begin{eqnarray} \label{eqn:charge_ineqn_opt1}
\sum_{v\in V} w(v) \leq \sum_{i=1}^N \sum_{v\in S^*_i} w(v).
\end{eqnarray}

Next we will show an inequality which bounds from above the total
charge of a subset in $\Rc^*$, \ie for any $S^* \in \Rc^*$,
\begin{eqnarray} \label{eqn:charge_ineqn_opt2}
\sum_{v\in S^*} w(v) \leq D_C (G)(1+H(|S^*|-1)).
\end{eqnarray}

Let $n_i$ $(i = 0,1,...,K)$ be the number of elements of $S^*$ that
have not been covered by $\Sc$ after iteration $i-1$, where step 1
is considered as iteration $0$. Notice that $n_0=|S^*|$ and $n_K=0$.
Let $\{i_1,...,i_k\}$ denote the subsequence of $\{i =
0,1,...,K-1\}$ such that $n_i - n_{i+1} >0$, \ie at iterations
$i=i_1,...,i_k$, at least one element of $S^*$ is covered by
$P^*_{Si}$ for the first time. For each element $v$ covered at
iteration $i_1$, if $i_1=0$, based on the greedy rule at step 1, we
have
\begin{eqnarray} \label{eqn:i_11_mod}
w(v) = e(P^*_{S_0}) \leq \frac{1}{n_{i_1}};
\end{eqnarray}
Otherwise, depending on whether a cover-adjacent subset or a
graph-adjacent subset is selected at iteration $i_1$,
\begin{eqnarray} \label{eqn:i_12_mod}
w(v) = e(P^*_{S_{i_1}}) = \left\{ \begin{array}{ll}
\frac{|P^*_{S_{i_1}}|}{|C (P^*_{S_{i_1}})|} & \textrm{(cover-adjacent)}\\
\frac{1}{|C (P^*_{S_{i_1}})|} & \textrm{(graph-adjacent)}
\end{array}\right\}
\leq \frac{D_C (G)}{n_{i_1}-n_{(i_1+1)}}.
\end{eqnarray}
The inequality in~(\ref{eqn:i_12_mod}) is due to three facts: (i)
$S_{i_1}$ is cover-adjacent with $\Rc$, leading to
$|P^*_{S_{i_1}}|\leq D_C (G)$; (ii) $P^*_{S_{i_1}}$ covers at least
$n_{i_1}-n_{(i_1+1)}$ elements of $V$, \ie $|C (P^*_{S_{i_1}})|\geq
n_{i_1}-n_{(i_1+1)}$; (iii) $D_C (G)\geq 1$. Combining
(\ref{eqn:i_11_mod}) and (\ref{eqn:i_12_mod}) yields
\begin{eqnarray} \label{eqn:i_13}
w(v)\leq \frac{D_C (G)}{n_{i_1}-n_{(i_1+1)}}.
\end{eqnarray}
The proof in~\citep{Zhang&Etal:09TCS} does not consider the case of
$i_1 \neq 0$, leading to the wrong inequality
\[
w(v)\leq \frac{1}{n_{i_1}-n_{(i_1+1)}}.
\]

Consider two cases:
\begin{itemize}
\item[(i)] If all the elements of $S^*$ are covered after
iteration $i_1$, \ie $n_{(i_1+1)}=0$, then
\begin{eqnarray} \label{eqn:sum_i1_k1}
\sum_{v\in S^*} w(v) \leq \sum_{v\in S^*} \frac{D_C (G)}{n_0} = D_C
(G).
\end{eqnarray}
\item[(ii)] If not all the elements of $S^*$ are covered
by $\Rc$ after iteration $i_1$, $S^*$ becomes cover-adjacent with
$\Rc$ and thus a candidate for being selected at the following
iterations. Then based on the greedy rule at step 2, we have that
for an element $v\in S^*$ covered at iteration $i_j$ $(j=2,...,k)$,
\begin{eqnarray} \label{eqn:i1_k2}
w(v) = e(P^*_{S_{i_j}}) \leq e(P_{S^*}) = \frac{|P_{S^*}|}{|C
(P_{S^*})|}\leq \frac{D_C (G)}{n_{i_j}}.
\end{eqnarray}
Notice that if $P_{S^*}$ is selected at iteration $i_j$, at least
$n_{i_j}$ elements will be covered for the first time, \ie $|C
(P_{S^*})|\geq n_{i_j}$.

It follows from (\ref{eqn:i_13},\ref{eqn:i1_k2}) that
\begin{eqnarray} \label{eqn:sum_i1_k2}
\sum_{v\in S^*} w(v) &\leq& (n_{i_1}-n_{(i_1+1)})\frac{D_C
(G)}{n_{i_1}-n_{(i_1+1)}}
+ \sum_{j=2}^k (n_{i_j}-n_{(i_j+1)}) \frac{D_C (G)}{n_{i_j}} \nn \\
&=& D_C (G) \left(1+\sum_{j=2}^k
\frac{n_{i_j}-n_{i_{(j+1)}}}{n_{i_j}}\right).
\end{eqnarray}
Here we have used the fact that $n_{(i_j+1)}=n_{i_{(j+1)}}$. It is
because between iteration $i_j$ and iteration $i_{(j+1)}$, no
elements of $S^*$ are covered.

For the summation term in (\ref{eqn:sum_i1_k2}), we have the
following inequality:
\begin{eqnarray} \label{eqn:harmonic}
\sum_{j=2}^k \frac{n_{i_j}-n_{i_{(j+1)}}}{n_{i_j}} &\leq&
\sum_{j=2}^k \frac{1}{n_{i_j}}
+\frac{1}{n_{i_j}-1}+\cdots +\frac{1}{n_{i_{(j+1)}}+1} \nn \\
&=& H(n_{i_2}) \leq H(|S^*|-1).
\end{eqnarray}
The last inequality is due to the fact that $n_{i_2}\leq n_{i_1}-1 =
|S^*|-1$.
\end{itemize}
Eqn. (\ref{eqn:charge_ineqn_opt2}) is a direct consequence of
(\ref{eqn:sum_i1_k1}), (\ref{eqn:sum_i1_k2}), and
(\ref{eqn:harmonic}). Thus, using
(\ref{eqn:charge_eqn}-\ref{eqn:charge_ineqn_opt2}),
\begin{eqnarray*}
|\Rc| &=& \sum_{v\in V} w(v) \leq \sum_{i=1}^N \sum_{v\in S^*_i} w(v) \\
    &\leq& \sum_{i=1}^N D_C (G) (1+H(|S^*_i|-1)) \\
    &\leq& D_C (G) (1+H(\gamma-1)) |\Rc^*|. ~~~~~~~~~~~~~~~\Box
\end{eqnarray*}
\end{pol}
\end{proof}

Let $n=|V|$ be the number of elements of $V$. Then the approximation
ratio of the modified greedy algorithm is $D_C (G)(1+H(\gamma
-1))=O(\ln n)$. Since the set cover problem is a special case of the
MCSC problem where the auxiliary graph $G$ is complete and the best
possible approximation ratio for the set cover problem is $O(\ln n)$
(unless NP has slightly superpolynomial time
algorithms)~\citep{Feige:98JACM}, the modified greedy algorithm
achieves the order-optimal approximation ratio.


\section{Connection with Connected Dominating Set Problem}
\label{sec:conn_CDS}

A dominating set of a graph is a subset of vertices such that every
vertex of the graph is either in the subset or a neighbor of some
vertex in the subset. The connected dominating set (CDS) problem
asks for a dominating set of minimum size where the subgraph induced
by the vertices in the dominating set is connected. It is not
difficult to show that the CDS problem is a special MCSC problem.
Specifically, given an undirected graph $H=(V,E)$, we can derive
an MCSC problem $(V,\Sc,G)$ from the CDS problem of $H$ as follows:
\begin{itemize}
\item[(i)] the universe set $V$ is the vertex set $V$ of $H$;
\item[(ii)] For each vertex $v\in V$, create a subset
$S_v = \{v\}\cup \{\textrm{all neighbors of $v$}\}$ of $V$ in $\Sc$;
\item[(iii)] the auxiliary graph $G$ is the same as the given
graph $H$ except that each vertex of $H$ is replaced by $S_v$, as
illustrated in Fig.~\ref{fig:CSD_MCSC}.
\end{itemize}
It can be shown that by exchanging the vertex subset $S_v$ with the
vertex $v$, the optimal solution to the derived MCSC problem is
equivalent to the optimal solution to the CDS problem.

\begin{figure}[htbp]
\centerline{
\begin{psfrags}
\psfrag{v1}[c]{$v_1$} \psfrag{v2}[c]{$v_2$} \psfrag{v3}[c]{$v_3$}
\psfrag{v4}[c]{$v_4$} \psfrag{v5}[c]{$v_5$} \psfrag{v6}[c]{$v_6$}
\psfrag{v7}[c]{$v_7$} \psfrag{v8}[c]{$v_8$}
\psfrag{B1}[c]{$\{v_1,v_2,v_4\}$} \psfrag{B2}[c]{$\{v_2,v_1,v_3\}$}
\psfrag{B3}[c]{$\{v_3,v_2,v_4,v_5,v_6,v_7\}$}
\psfrag{B4}[c]{$\{v_4,v_1,v_3,v_7\}$}
\psfrag{B5}[c]{$\{v_5,v_3,v_8\}$} \psfrag{B6}[c]{$\{v_6,v_3\}$}
\psfrag{B7}[c]{$\{v_7,v_3,v_4\}$} \psfrag{B8}[c]{$\{v_8,v_5\}$}
\psfrag{H}[c]{$H$} \psfrag{G}[c]{$G$}
\scalefig{0.8}\epsfbox{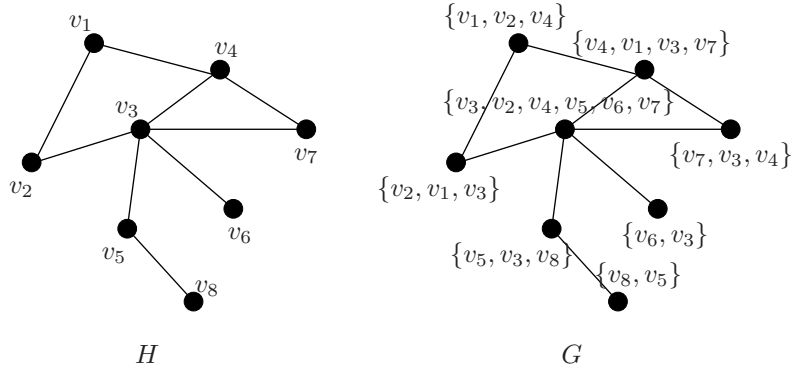}
\end{psfrags}
} \caption{An illustration of the auxiliary graph $G$ derived from
the given graph $H$.} \label{fig:CSD_MCSC}
\end{figure}

Guha and Khuller propose a greedy algorithm (Algorithm I in~\citep{Guha&Khuller:98Algorithmica}) for CDS problem with an
approximation ratio $2(1+H(\gamma-1))$, where
$\gamma = \max \{|S_v|~|~v\in V\}$ and $\gamma -1$ is the maximum
degree of the vertices in $H$. The modified greedy algorithm for
the MCSC problem reduces to the greedy algorithm
of~\citep{Guha&Khuller:98Algorithmica} when applied to the CDS
problem. Notice that $D_C (G) = 2$ for the derived MCSC problem,
since two vertex subsets $S_{v1}$ and $S_{v2}$ are overlapping
if and only if their corresponding vertices $v_1$ and $v_2$ have
at least one common neighbor. We see that the approximation ratio
of the modified greedy algorithm established here is consistent
with the one shown in~\citep{Guha&Khuller:98Algorithmica},
while the original approximation ratio obtained
in~\citep{Zhang&Etal:09TCS} is not.


\bibliographystyle{elsarticle-num-names}

\begin{thebibliography}{0}
\providecommand{\natexlab}[1]{#1}
\providecommand{\url}[1]{\texttt{#1}}
\providecommand{\urlprefix}{URL }
\expandafter\ifx\csname urlstyle\endcsname\relax
  \providecommand{\doi}[1]{doi:\discretionary{}{}{}#1}\else
  \providecommand{\doi}[1]{doi:\discretionary{}{}{}\begingroup
  \urlstyle{rm}\url{#1}\endgroup}\fi
\providecommand{\bibinfo}[2]{#2}

\end{thebibliography}


\begin{thebibliography}{3}
\providecommand{\natexlab}[1]{#1}
\providecommand{\url}[1]{\texttt{#1}}
\providecommand{\urlprefix}{URL }
\expandafter\ifx\csname urlstyle\endcsname\relax
  \providecommand{\doi}[1]{doi:\discretionary{}{}{}#1}\else
  \providecommand{\doi}[1]{doi:\discretionary{}{}{}\begingroup
  \urlstyle{rm}\url{#1}\endgroup}\fi
\providecommand{\bibinfo}[2]{#2}

\bibitem[{Zhang et~al.(2009)Zhang, Gao, and Wu}]{Zhang&Etal:09TCS}
\bibinfo{author}{Z.~Zhang}, \bibinfo{author}{X.~F. Gao}, \bibinfo{author}{W.~L.
  Wu}, \bibinfo{title}{Algorithms for Connected Set Cover Problem and
  Fault-Tolerant Connected Set Cover Problem}, \bibinfo{journal}{Theoretical
  Computer Science} \bibinfo{volume}{410}~(\bibinfo{number}{8-10})
  (\bibinfo{year}{2009}) \bibinfo{pages}{812--817}.

\bibitem[{Feige(1998)}]{Feige:98JACM}
\bibinfo{author}{U.~Feige}, \bibinfo{title}{A Threshold of $\ln n$ for
  Approximating Set Cover}, \bibinfo{journal}{Journal of the ACM}
  \bibinfo{volume}{45}~(\bibinfo{number}{4}) (\bibinfo{year}{1998})
  \bibinfo{pages}{634--652}.

\bibitem[{Guha and Khuller(1998)}]{Guha&Khuller:98Algorithmica}
\bibinfo{author}{S.~Guha}, \bibinfo{author}{S.~Khuller},
  \bibinfo{title}{Approximation Algorithms for Connected Dominating Sets},
  \bibinfo{journal}{Algorithmica} \bibinfo{volume}{20}~(\bibinfo{number}{4})
  (\bibinfo{year}{1998}) \bibinfo{pages}{374--387}.

\end{thebibliography}

\end{document}